\theoremstyle{plain}
\newtheorem{thm}{Theorem}
\newtheorem{cor}[thm]{Corollary}
\theoremstyle{definition}
\newtheorem{exmp}{Example}
\theoremstyle{remark}
\newcommand{\ex}{\mathbb{E}}
\newcommand{\df}{\mathrm{d}}
\newcommand{\PP}{\mathcal{P}}
\newcommand{\XX}{\mathcal{X}}
\newcommand{\YY}{\mathcal{Y}}
\newcommand{\FF}{\mathcal{F}}
\newcommand{\MM}{\mathcal{M}}
\newcommand{\PX}{\mathcal{P}(\mathcal{X})}
\begin{document}
\title{Properties of the Strong Data Processing Constant for R\'enyi Divergence} 


\author{%
  \IEEEauthorblockN{Lifu~Jin}
  \IEEEauthorblockA{EPFL\\
                    Lausanne, Switzerland\\
                    lifu.jin@epfl.ch} \and
   \IEEEauthorblockN{Amedeo Roberto Esposito}\IEEEauthorblockA{
Institute of Science and Technology Austria (ISTA)\\ Klosterneuburg, Austria\\
amedeoroberto.esposito@ist.ac.at}
\and
\IEEEauthorblockN{Michael Gastpar}\IEEEauthorblockA{ EPFL\\ Lausanne, Switzerland \\ michael.gastpar@epfl.ch}
}


\maketitle


\begin{abstract}
Strong data processing inequalities (SDPI) are an important object of study in Information Theory and have been well studied for $f$-divergences. Universal upper and lower bounds have been provided along with several applications, connecting them to impossibility (converse) results, concentration of measure, hypercontractivity, and so on. In this paper, we study R\'enyi divergence and the corresponding SDPI constant whose behavior seems to deviate from that of ordinary $\Phi$-divergences. In particular, one can find examples showing that the universal upper bound relating its SDPI constant to the one of Total Variation does not hold in general. In this work, we prove, however, that the universal lower bound involving the SDPI constant of the Chi-square divergence does indeed hold. Furthermore, we also provide a characterization of the distribution that achieves the supremum when $\alpha$ is equal to $2$ and consequently compute the SDPI constant for R\'enyi divergence of the general binary channel.
\end{abstract}

\section{Introduction}

The well-known data processing inequality (DPI) for relative entropy states that for any two probability distributions $\mu, \nu$ such that $\nu \ll \mu$ over an alphabet $\mathcal{X}$ and for any stochastic transformation $K$ with input alphabet $\mathcal{X}$ and output alphabet $\mathcal{Y}$, we have
\begin{equation}
    D_{\text{KL}}(\nu K|| \mu K) \leq D_{\text{KL}}(\nu || \mu),
\end{equation}
where $D_{\text{KL}}(\nu || \mu)$ is the KL-divergence between $\nu$ and $\mu$. This inequality can be improved if we fix $\mu, K$ and only let $\nu$ vary. Indeed, for some stochastic transformation $K$, one can have that for every $\nu$, unless $\nu = \mu$, $D_{\text{KL}}(\nu K|| \mu K)$ is strictly less than $D_{\text{KL}}(\nu || \mu)$. Formally, one can define the Strong Data Processing Inequality (SDPI) constant of the KL-divergence for $\mu, K$ as in \cite{ahlswede1976spreading, raginsky2016strong, polyanskiy2017strong}
\begin{equation}
    \eta_{\text{KL}}(\mu, K) := \sup_{\nu \neq \mu} \dfrac{D_{\text{KL}}(\nu K|| \mu K)}{D_{\text{KL}}(\nu || \mu)}.
\end{equation}
Moreover, $\Phi$-divergences, a well-known generalization of the KL-divergence~\cite{csizsar1967,liese06}, also allow for the definition of a corresponding SDPI constant. Considering a convex function $\Phi: \mathbb{R}^+ \to \mathbb{R}$ such that $\Phi(1)=0$ and two probability distributions $\nu \ll \mu$, one can define the $\Phi$-divergence as
\begin{equation}
    D_\Phi(\nu\|\mu) = \mathbb{E}_\mu\left[\Phi\left(\frac{d\nu}{d\mu}\right)\right].
\end{equation}
Given a Markov kernel $K$, the corresponding SDPI constant is defined as
\begin{align}
    \eta_{\Phi}(\mu, K) & := \sup_{\nu: \nu \neq \mu} \dfrac{D_{\Phi}(\nu K \| \mu K)}{D_{\Phi}(\nu \| \mu)}, \\
    \eta_{\Phi}(K) &:= \sup_{\mu \in \PP(\XX)} \eta_{\Phi}(\mu, K),
\end{align}
where $\PP(\XX)$ is the set of all the probability distributions over $\XX$. These objects have been connected to several others, such as the maximal correlation and the so-called hypercontractivity constants of certain Markov operators~\cite{Witsenhausen1975ONSO,anantharam2013maximal}. Moreover, given the importance of the Data Processing Inequality in Information Theory and the possibility of improving the corresponding results computing the SDPI constants, they have gained increasing interest over the years, leading to a variety of applications, universal upper and lower bounds~\cite{cohen1993relative,delMoral03,raginsky2016strong,polyanskiy2017strong}. 
In particular, it is known that for any $\Phi$ and any channel $K$~\cite[Theorem 3.1]{raginsky2016strong} 
\begin{equation}
    \eta_\Phi(K) \leq \eta_{\text{TV}}(K),  \label{eq:boundTV}
\end{equation} while if $\Phi$ is also three times differentiable, given any measure $\mu$~\cite[Theorem 3.3]{raginsky2016strong}:
\begin{align}
    \eta_\Phi(\mu,K) \geq  \eta_{\chi^2}(\mu,K) \text{ and, }
     \eta_\Phi(K) \geq  \eta_{\chi^2}(K). \label{eq:boundChi}
\end{align}
$\eta_{\text{TV}}$ and $\eta_{\chi^2}$ denote the SDPI constant of the divergences induced, respectively, by $\Phi(x)=\frac12|x-1|$ and $\Phi(x)=x^2-1$.

R\'enyi divergences represent another family of divergences that are known to satisfy the DPI and lend themselves to the definition of a corresponding SDPI constant. The R\'enyi divergence $D_{\alpha}(\nu||\mu)$ of two probability distributions $\nu \ll \mu$ for $\alpha \not \in \{0, 1, \infty\}$ can be defined as follows \cite{van2014renyi}
\begin{equation}
     D_{\alpha}(\nu \| \mu) := \dfrac{1}{\alpha - 1} \log \ex_{\mu}\left[\left(\dfrac{\df \nu}{\df \mu}\right)^{\alpha}\right].
\end{equation}
Moreover, one can define the corresponding SDPI constant
\begin{align}
    \eta_{\alpha}(\mu, K) & := \sup_{\nu: \nu \neq \mu} \dfrac{D_{\alpha}(\nu K \| \mu K)}{D_{\alpha}(\nu \| \mu)}, \label{eq:sdpiRenyi} \\
    \eta_{\alpha}(K) &:= \sup_{\mu \in \PP(\XX)} \eta_{\alpha}(\mu, K).
\end{align}
Notice that the R\'enyi Divergences do not belong to the family of $\Phi$-divergences. However, they can be connected to the Hellinger divergences via a $1$-$1$ mapping, see \cite[Eq. (80)]{sasonV16}. 
Despite this, its behavior appears to be different from that of $\Phi$-divergences. As a case in point,
it cannot be upper bounded as in Eq.~\eqref{eq:boundTV} by the SDPI constant of the Total Variation distance. An explicit example showing that we may have $\eta_{\alpha} > \eta_{\text{TV}}$ appears in~\cite[Example 2]{esposito2023concentration}. In general, one cannot leverage the results derived for the SDPI of $\Phi$-Divergences to provide bounds on the SDPI of $D_\alpha$. In particular, existing results do not imply a universal lower bound for $D_\alpha$, like the one in Eq. \eqref{eq:boundChi}.
Consequently, the SDPI constant of R\'enyi divergence stands as an interesting object of study. Indeed, it can be proven to be related to the log-Sobolev inequality and hypercontractivity of certain types of operators~\cite{gross75}. Characterizing it can lead to improved concentration results for non-independent random variables~\cite{esposito2023concentration} and improved lower bounds on the Bayesian risk in estimation procedures with privatized samples~\cite[Section 3.1]{esposito2023lower}. Moreover, a recent line of work is studying the contraction of R\'enyi Divergences along diffusion processes in order to provide convergence guarantees in the context of sampling~\cite{vempala2019Langevin,chen2022Sampling}. To conclude, the limiting behaviour of the R\'enyi divergence could provide new characterizations of $\eta_{\text{KL}}$, since we have by \cite[Theorem 5]{van2014renyi} that $
    D_{\text{KL}}(\nu||\mu) = \lim\limits_{\alpha \downarrow 1} D_{\alpha}(\nu||\mu). $

In this paper, we study said object. Notably, we can prove that the universal lower-bound does indeed hold and thus
\begin{equation}
    \eta_\alpha(K) \geq \eta_{\chi^2}(K).
\end{equation}
Moreover, we study some properties of $\eta_{\alpha}$ when $\alpha=2$ and we manage to characterize the distribution achieving the supremum in~\eqref{eq:sdpiRenyi}. We will explain why our theorems are promising with some simple but conceptually crucial examples.

\section{Preliminaries and Notations}

We denote by $\mathcal{P}(\mathcal{X})$ the set of all probability distributions on an alphabet (or a space) $\mathcal{X}$. The set of all real-valued functions on
$\XX$ is denoted by $\FF(\XX)$. Markov kernels (channels) $\{K(y|x): x \in \mathcal{X}, y \in \mathcal{Y}\}$ acts on probability distributions $\mu \in \mathcal{P}(\mathcal{X})$ from the right as follows
\begin{equation}
    \mu K(y) = \sum_{x \in \XX} \mu(x) K(y|x), \qquad y \in \YY,
\end{equation}
or on functions $f \in \FF(\XX)$ from the left as follows
\begin{equation}
    K f(x) = \sum_{y \in \YY} K(y|x) f(y), \qquad y \in \YY.
\end{equation}
The set of all such kernels is denoted as $\mathcal{M}(\YY|\XX)$.

We say a pair $(\mu, K) \in \PX \times \MM(\YY|\XX)$ is admissible if $\mu \in \PP_*(\XX)$ and $\mu K \in \PP_*(\YY)$, where $\PP_*$ denotes all strictly positive distributions. For any such pair, there exists a unique channel $K^* \in \MM(\XX|\YY)$ with the property that
\begin{equation}
    \ex[g(Y) Kf(Y)] = \ex[K^*g(X) f(X)]
\end{equation}
for all $f \in \FF(\XX), g \in \FF(\YY)$. The above so-called \textit{adjoint channel} can be characterized in discrete settings as follows:
\begin{equation}
    K^*(x|y) = \dfrac{K(y|x) \mu(x)}{\mu K(y)}.
\end{equation}
Moreover, for $f = \df \nu / \df \mu$, it holds by \cite[Section 1.1 and Lemma A.1]{raginsky2016strong} that
\begin{equation}
    K^* f = \dfrac{\df (\nu K)}{\df (\mu K)}.
    \label{eq:adjchannel}
\end{equation}

In the following discussion of this paper, we will assume $1 < \alpha < \infty$ and $\mu \in \PP_*(\XX)$, $\mu K \in \PP_*(\YY)$ for $D_{\alpha}$ unless specified differently. Then it is a direct consequence that $\nu \ll \mu$ and $\nu K \ll \mu K$ for every $\nu \in \mathcal{P}(\XX)$.

\section{Lower bounds for the SDPI constants}
In this section, we will present two different versions of the lower bound for $\eta_{\alpha}(\mu, K)$. The first one, Theorem~\ref{thm:1} is universal and relates $\eta_\alpha$ to $\eta_{\chi^2}$ (similarly to the universal lower bound for $\Phi$-divergences). The second one, Theorem~\ref{thm:main}, provides a lower bound that only holds for finite alphabets but is easier to calculate. We will then show with two examples that Theorems \ref{thm:1} and \ref{thm:main} coincide in some cases. Moreover, Theorem \ref{thm:main} can sometimes help us identify whether $\eta_{\alpha}(\mu, K) = 1.$
\begin{thm}[Lower bound, Version 1]
    Given a probability distribution $\mu$ and a Markov kernel $K$, the SDPI constant for R\'enyi divergence satisfies the following inequality:
    \begin{equation}
        \eta_{\alpha}(\mu, K) \geq \eta_{\chi^2}(\mu, K).
        \label{eq:thm1res}
    \end{equation}
    \label{thm:1}
\end{thm}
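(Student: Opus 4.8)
The plan is to exhibit, for any fixed admissible pair $(\mu,K)$, a one-parameter family of perturbations $\nu_t$ of $\mu$ along which the ratio $D_\alpha(\nu_t K\|\mu K)/D_\alpha(\nu_t\|\mu)$ tends, as $t\to 0$, to the local quantity $\eta_{\chi^2}(\mu,K)$. Since $\eta_\alpha(\mu,K)$ is a supremum over all $\nu\neq\mu$, establishing this limit immediately yields the inequality~\eqref{eq:thm1res}. The natural choice is $\nu_t = \mu + t\,\rho$ where $\rho$ is a signed measure with $\rho(\XX)=0$ realizing (or approaching) the extremizer of the Rayleigh-type quotient that defines $\eta_{\chi^2}(\mu,K)$, namely $\eta_{\chi^2}(\mu,K)=\sup_{g}\frac{\mathrm{Var}_{\mu K}(K^*g)}{\mathrm{Var}_\mu(g)}$ with $g=\df\nu/\df\mu-1$; equivalently one perturbs in the direction of the top eigenvector of the operator $K^*$ acting on the $L^2$-orthocomplement of constants.

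The key steps, in order, are as follows. First, write $f_t := \df\nu_t/\df\mu = 1 + tg$ with $\ex_\mu[g]=0$, and expand $\ex_\mu[f_t^\alpha] = \ex_\mu[(1+tg)^\alpha] = 1 + \binom{\alpha}{2}t^2\,\ex_\mu[g^2] + o(t^2)$ using $\ex_\mu[g]=0$, hence
\begin{equation}
    D_\alpha(\nu_t\|\mu) = \frac{1}{\alpha-1}\log\bigl(1 + \tfrac{\alpha}{2}t^2\,\ex_\mu[g^2] + o(t^2)\bigr) = \frac{\alpha}{2(\alpha-1)}\,t^2\,\ex_\mu[g^2] + o(t^2).
\end{equation}
Second, by~\eqref{eq:adjchannel}, $\df(\nu_t K)/\df(\mu K) = K^*f_t = 1 + t\,K^*g$ and $\ex_{\mu K}[K^*g] = \ex_\mu[g] = 0$, so the same expansion gives
\begin{equation}
    D_\alpha(\nu_t K\|\mu K) = \frac{\alpha}{2(\alpha-1)}\,t^2\,\ex_{\mu K}[(K^*g)^2] + o(t^2).
\end{equation}
Third, take the ratio: the prefactors $\frac{\alpha}{2(\alpha-1)}$ and the $t^2$ cancel, so
\begin{equation}
    \lim_{t\to 0}\frac{D_\alpha(\nu_t K\|\mu K)}{D_\alpha(\nu_t\|\mu)} = \frac{\ex_{\mu K}[(K^*g)^2]}{\ex_\mu[g^2]} = \frac{\mathrm{Var}_{\mu K}(K^*g)}{\mathrm{Var}_\mu(g)}.
\end{equation}
Fourth, choose $g$ (or a sequence $g_n$) approaching the supremum defining $\eta_{\chi^2}(\mu,K)$; since $\eta_\alpha(\mu,K)$ dominates the limit along every such family, we conclude $\eta_\alpha(\mu,K)\ge\eta_{\chi^2}(\mu,K)$.

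The main obstacle is making the asymptotic expansions rigorous and uniform enough to pass to the limit. One must check that for $t$ small enough $\nu_t$ is a genuine probability distribution (this needs $g$ bounded, or a truncation argument, which is fine on a finite alphabet and since $\mu\in\PP_*(\XX)$), that the $o(t^2)$ remainder in $\ex_\mu[(1+tg)^\alpha]$ is legitimate for non-integer $\alpha>1$ (a Taylor bound on $x\mapsto(1+x)^\alpha$ near $x=0$, again using boundedness of $g$), and that the logarithm expansion is valid since its argument $\to 1$. A secondary subtlety is whether the supremum in $\eta_{\chi^2}(\mu,K)$ is attained: on a finite alphabet it is (the quotient is a ratio of quadratic forms, extremized by an eigenvector of a self-adjoint operator), so a single direction $g$ suffices; in general one argues with an approximating sequence. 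None of these is deep, but they are where the care must go.
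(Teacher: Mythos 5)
Your argument is essentially the paper's own proof: both perturb along $\nu_t=\mu+t(\nu-\mu)$, observe that the first-order term vanishes because $\ex_\mu[g]=0$, extract the second-order term, and identify the limiting ratio with the $\chi^2$ Rayleigh quotient $\ex_{\mu K}[(K^*g)^2]/\ex_\mu[g^2]$ via the adjoint channel. The only slip is the constant: $\ex_\mu[(1+tg)^\alpha]=1+\binom{\alpha}{2}t^2\,\ex_\mu[g^2]+o(t^2)$ yields $D_\alpha(\nu_t\|\mu)=\tfrac{\alpha}{2}\,t^2\,\ex_\mu[g^2]+o(t^2)$ rather than $\tfrac{\alpha}{2(\alpha-1)}$, which is immaterial since the prefactor cancels in the ratio.
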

\begin{proof}
    Let $\nu_{\varepsilon} = \mu + \varepsilon(\nu - \mu)$ for $0 < \varepsilon < 1$ and $f = \df \nu / \df \mu - 1$. We could derive that
    \begin{align}
        \left. \dfrac{\partial D_{\alpha}(\nu_{\varepsilon}||\mu)}{\partial \varepsilon} \right|_{\varepsilon = 0} & = \left. \dfrac{\alpha \ex_{\mu}[(1 + \varepsilon f)^{\alpha - 1} f]}{(\alpha - 1) \ex_{\mu}[(1 + \varepsilon f)^{\alpha}]}\right|_{\varepsilon = 0} = 
 0,\label{eq:1stder}\\
        \left. \dfrac{\partial^2 D_{\alpha}(\nu_{\varepsilon}||\mu)}{\partial \varepsilon^2} \right|_{\varepsilon = 0} & = \lim\limits_{\varepsilon \downarrow 0} \dfrac{1}{\varepsilon} \cdot \dfrac{\partial D_{\alpha}(\nu_{\varepsilon}||\mu)}{\partial \varepsilon} \\ &= \left. \dfrac{\alpha \ex_{\mu}[(1 + \varepsilon f)^{\alpha - 2} f^2]}{\ex_{\mu}[(1 + \varepsilon f)^{\alpha}]}\right|_{\varepsilon = 0} = \alpha \ex_{\mu}[f^{2}], \label{eq:2ndder}
    \end{align}
    where Eq. \eqref{eq:1stder} is by direct calculation of partial derivative, and Eq. \eqref{eq:2ndder} is due to the L'H\^opital's rule. Similarly,
    \begin{align}
        \left. \dfrac{\partial D_{\alpha}(\nu_{\varepsilon}K||\mu K)}{\partial \varepsilon} \right|_{\varepsilon = 0} & = 0, \\
        \left. \dfrac{\partial^2 D_{\alpha}(\nu_{\varepsilon} K||\mu K)}{\partial \varepsilon^2} \right|_{\varepsilon = 0} & = \alpha \ex_{\mu K}[(K^*f)^{2}].
    \end{align}
    Here $K^*$ is the adjoint channel of $K$. Therefore, we could apply the Taylor expansion to $D_{\alpha}(\nu_{\varepsilon}||\mu)$, which yields
    \begin{equation}
        D_{\alpha}(\nu_{\varepsilon}||\mu) = \dfrac{\alpha}{2} \ex_{\mu}[f^{2}] \varepsilon^2 + o(\varepsilon^2).
    \end{equation}
    Likewise,
    \begin{equation}
        D_{\alpha}(\nu_{\varepsilon} K||\mu K) = \dfrac{\alpha}{2} \ex_{\mu K}[(K^*f)^{2}] \varepsilon^2 + o(\varepsilon^2).
    \end{equation}
    By the definition of $\eta_{\alpha}(\mu, K)$, since $\{\nu_{\varepsilon}: \nu \neq \mu, 0 < \varepsilon < 1\} \subseteq \{\nu: \nu \neq \mu\}$, we have
    \begin{equation}
        \eta_{\alpha}(\mu, K) \geq \sup_{\nu: \nu \neq \mu} \dfrac{D_{\alpha}(\nu_{\varepsilon} K \| \mu K)}{D_{\alpha}(\nu_{\varepsilon} \| \mu)}.
    \end{equation}
    Therefore, we conclude that
    \begin{align}
        \eta_{\alpha}(\mu, K) & \geq \sup_{\nu: \nu \neq \mu} \dfrac{\ex_{\mu K}[(K^*f)^{2}] \varepsilon^2 + o(\varepsilon^2)}{\ex_{\mu}[f^{2}] \varepsilon^2 + o(\varepsilon^2)} \\
        & = \sup_{\nu: \nu \neq \mu} \dfrac{\ex_{\mu K}[(K^*f)^{2}] + o(1)}{\ex_{\mu}[f^{2}] + o(1)} = \eta_{\chi^2}(\mu, K), \label{eq:thm1conclusion}
    \end{align}
    where Eq. \eqref{eq:thm1conclusion} is derived by \cite[Remark 3.5]{raginsky2016strong}. This completes the proof.
\end{proof}

It is then immediate to see that $\eta_{\chi^2}$ is a universal lower bound for $\eta_{\alpha}$.
\begin{cor}
    Given a Markov kernel $K$, it holds
    \begin{equation}
        \eta_{\alpha}(K) \geq \eta_{\chi^2}(K).
    \end{equation}
\end{cor}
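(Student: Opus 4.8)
The plan is essentially to take the supremum over the input distribution on both sides of the inequality furnished by Theorem~\ref{thm:1}. Concretely, since \eqref{eq:thm1res} holds for every admissible pair $(\mu, K)$, i.e.\ for every $\mu \in \PP_*(\XX)$ with $\mu K \in \PP_*(\YY)$, we may take $\sup_{\mu}$ on both sides. On the left we obtain $\sup_{\mu} \eta_{\alpha}(\mu, K)$, and on the right $\sup_{\mu} \eta_{\chi^2}(\mu, K)$, so that $\sup_{\mu} \eta_{\alpha}(\mu,K) \geq \sup_{\mu} \eta_{\chi^2}(\mu,K)$, and it remains to identify each side with the respective channel constant $\eta_\alpha(K)$, $\eta_{\chi^2}(K)$.

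The one point that needs a word of justification is that the supremum in the definitions of $\eta_\alpha(K)$ and $\eta_{\chi^2}(K)$ ranges over all of $\PP(\XX)$, whereas Theorem~\ref{thm:1} only gives us the inequality over the strictly positive part $\PP_*(\XX)$ (with $\mu K$ also strictly positive). First I would note the trivial inequality $\sup_{\mu \in \PP_*(\XX)} \eta_\bullet(\mu,K) \le \eta_\bullet(K)$ for $\bullet \in \{\alpha, \chi^2\}$. For the reverse, the standard observation is that restricting the alphabet to the support of $\mu$ (and of $\mu K$) does not change either ratio, so that any $\mu \in \PP(\XX)$ may be replaced by a strictly positive distribution on a sub-alphabet without decreasing $\eta_\bullet(\mu,K)$; alternatively one invokes lower semicontinuity of $\mu \mapsto \eta_\bullet(\mu,K)$ and approximates an arbitrary $\mu$ by a sequence in $\PP_*(\XX)$. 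Either way one concludes $\sup_{\mu \in \PP_*(\XX)} \eta_\bullet(\mu,K) = \eta_\bullet(K)$, and the corollary follows.

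The argument contains no real obstacle: the entire analytic content sits in Theorem~\ref{thm:1}, and the corollary is a routine passage to the supremum. The only thing to be careful about is the bookkeeping around degenerate (non-strictly-positive) distributions in the two suprema; handling that via the support-restriction remark is cleaner and avoids any continuity discussion altogether.
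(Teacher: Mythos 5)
Your proposal is correct and takes essentially the same route as the paper, whose entire proof is to take the supremum over $\mu$ on both sides of Eq.~\eqref{eq:thm1res}. Your additional remark on reconciling the supremum over $\PP_*(\XX)$ with the one over all of $\PP(\XX)$ is a reasonable piece of bookkeeping that the paper glosses over, but it does not change the argument.
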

\begin{proof}
    Take the supremum over $\mu \in \mathcal{P}_{*}(\XX)$ on both sides of Eq. \eqref{eq:thm1res}.
\end{proof}

Now, we shall introduce another version of the lower bound for $\eta_{\alpha}$ which, differently from the previous one, can be written in closed form and is thus easier to compute.
\begin{thm}[Lower bound, Version 2]
Let $K$ be the kernel induced by an $n \times m$ stochastic matrix and $\mu$ be a discrete distribution on a finite alphabet $\mathcal{X}$ with $|\mathcal{X}| = n$. The SDPI constant for R\'enyi divergence satisfies the following inequality:
\begin{equation}
    \eta_{\alpha}(\mu, K) \geq \max_{1 \leq i \neq \ell \leq n} \dfrac{\mu_i \mu_{\ell}}{\mu_i + \mu_{\ell}} \sum_{j = 1}^m \dfrac{{(K_{ij} - K_{\ell j})^2}}{(\mu K)_j}.
\end{equation}
\label{thm:main}
\end{thm}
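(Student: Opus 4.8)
The plan is to derive the closed-form bound by combining Theorem~\ref{thm:1} with the classical variational characterization of the $\chi^2$-SDPI constant and then evaluating that characterization on a carefully chosen two-point perturbation of $\mu$. By Theorem~\ref{thm:1} we have $\eta_\alpha(\mu,K) \geq \eta_{\chi^2}(\mu,K)$, so it suffices to lower-bound $\eta_{\chi^2}(\mu,K)$ by the quantity on the right-hand side. Recall (this is exactly the fact used at the end of the proof of Theorem~\ref{thm:1}, via \cite[Remark~3.3]{raginsky2016strong}) that
\[
\eta_{\chi^2}(\mu,K) \;=\; \sup_{f\,:\, \ex_\mu[f]=0,\ f\not\equiv 0}\ \frac{\ex_{\mu K}\big[(K^*f)^2\big]}{\ex_\mu[f^2]},
\]
because every mean-zero, nonzero $f$ arises, after multiplying by a sufficiently small positive constant, as $f = \df\nu/\df\mu - 1$ for some admissible $\nu \neq \mu$ (so that $\mu(1+f) \in \PP(\XX)$), and the ratio above is invariant under positive scaling of $f$.

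The key step is to test this supremum against a function supported on two points. Fix indices $i \neq \ell$ and set $f = t\,\mathbf 1_{\{i\}} + s\,\mathbf 1_{\{\ell\}}$ with $s = -t\mu_i/\mu_\ell$, which forces $\ex_\mu[f] = t\mu_i + s\mu_\ell = 0$. A direct computation gives $\ex_\mu[f^2] = \mu_i t^2 + \mu_\ell s^2 = t^2\mu_i(\mu_i+\mu_\ell)/\mu_\ell$. Using the adjoint-channel formula $K^*(x|y) = K(y|x)\mu(x)/(\mu K)(y)$ one gets, for each $j$,
\[
K^* f(j) \;=\; \frac{K_{ij}\mu_i\,t + K_{\ell j}\mu_\ell\, s}{(\mu K)_j} \;=\; \frac{t\,\mu_i\,(K_{ij}-K_{\ell j})}{(\mu K)_j},
\]
and therefore $\ex_{\mu K}[(K^*f)^2] = t^2\mu_i^2 \sum_{j=1}^m (K_{ij}-K_{\ell j})^2/(\mu K)_j$. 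Dividing the two expressions, the factor $t^2$ and one factor $\mu_i$ cancel, leaving precisely $\tfrac{\mu_i\mu_\ell}{\mu_i+\mu_\ell}\sum_{j=1}^m (K_{ij}-K_{\ell j})^2/(\mu K)_j$. Taking the maximum over all pairs $i \neq \ell$ and invoking Theorem~\ref{thm:1} yields the claim.

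The argument is essentially a computation, so I do not expect a real obstacle; the only points that need a line of care are (i) checking that the two-point $f$ is induced by an admissible $\nu$ — handled by taking $|t|$ small and exploiting scale invariance of the ratio, together with the standing hypotheses $\mu\in\PP_*(\XX)$ and $\mu K\in\PP_*(\YY)$, which also guarantee that every denominator $(\mu K)_j$ is strictly positive and the stated expression is well defined — and (ii) correctly pushing $f$ through $K^*$. As an alternative presentation, one could avoid citing \cite[Remark~3.3]{raginsky2016strong} and instead re-run the Taylor-expansion computation from the proof of Theorem~\ref{thm:1} directly with $\nu_\varepsilon = \mu + \varepsilon(\nu-\mu)$ for this specific two-point $\nu$; this reproduces the same estimate, and the choice between the two routes is purely a matter of exposition.
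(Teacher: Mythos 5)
Your proof is correct, but it takes a genuinely different route from the paper's. The paper works directly with the R\'enyi ratio: it defines $P(f)=\ex_{\mu K}[(1+K^*f)^{\alpha}]$ and $Q(f)=\ex_{\mu}[(1+f)^{\alpha}]$, restricts $f$ to a one-parameter two-point perturbation (coordinate $i$ free, coordinate $n$ determined by the zero-mean constraint), and extracts the bound as the necessary second-order condition $\partial_i^2 P(0)-\eta\,\partial_i^2 Q(0)\le 0$ at $f=0$, obtained via L'H\^opital. You instead chain Theorem~\ref{thm:1} with the variational characterization $\eta_{\chi^2}(\mu,K)=\sup_{f:\,\ex_\mu[f]=0,\,f\neq 0}\ex_{\mu K}[(K^*f)^2]/\ex_\mu[f^2]$ from \cite[Remark 3.3]{raginsky2016strong} (already invoked in the paper's proof of Theorem~\ref{thm:1}) and evaluate it on the same two-point test functions; your computations of $\ex_\mu[f^2]$ and $K^*f$ are correct, and the admissibility/scaling caveat is handled properly. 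In substance both arguments test the same two-point perturbations near $\mu$, so the resulting bound is identical; what your route buys is brevity and a transparent explanation of why the bound is independent of $\alpha$ (it is really a lower bound on $\eta_{\chi^2}(\mu,K)$ obtained from specific test functions), while the paper's route is self-contained for the R\'enyi objective and only uses the \emph{necessary} local condition rather than the full $\chi^2$ variational formula.
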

\begin{proof}
For a fixed $\nu$, set $f = \df \nu / \df \mu - 1$. We have $\ex_{\mu}[f] = 0$ and we can write
    ${D_{\alpha}(\nu K \| \mu K)}/{D_{\alpha}(\nu \| \mu)} = {\log \ex_{\mu K}[(1+K^*f)^{\alpha}]}/{\log \ex_{\mu}[(1+f)^{\alpha}]}.$
To continue, let us express $f$ as a vector and
denote its components by $f = [f_1, \ldots, f_n]^{\top}$. 
Since $\ex_{\mu}[f] = 0,$  we can express
\begin{equation}
    f_n = - \dfrac{1}{\mu_n}\sum_{r=1}^{n-1} \mu_r f_r.
    \label{eq:elln}
\end{equation}
Define the following quantities:
\begin{align}
    P(f) &:= \ex_{\mu K}[(1+K^*f)^{\alpha}] \\
    &= \sum_{j=1}^m (\mu K)_j \left(1 + \sum_{\ell=1}^{n-1} \left(K^*_{j\ell} - \dfrac{\mu_{\ell}}{\mu_n} K^*_{jn}\right)f_{\ell}\right)^{\alpha},\\
    Q(f) &:= \ex_{\mu}[(1+f)^{\alpha}]\\
    &= \sum_{j=1}^{n-1} \mu_j (1+f_j)^{\alpha} + \mu_n\left(1 - \dfrac{1}{\mu_n}\sum_{r=1}^{n-1} \mu_r f_r \right)^{\alpha}.
\end{align}
We evaluate the following derivatives, in which we denote with $\partial_i$ being the partial derivative of the corresponding function with respect to $f_i$:
\begin{align}
    \partial_i P(0) &= \alpha \sum_{j = 1}^m (\mu K)_j \left(K^*_{ji} - \dfrac{\mu_i}{\mu_n} K^*_{jn}\right) \\
    &= \alpha \sum_{j = 1}^m (\mu K)_j \left( \dfrac{\mu_i}{(\mu K)_j} K_{ij} - \dfrac{\mu_{i}}{\mu_n} \dfrac{\mu_{n}}{(\mu K)_j}K_{nj}\right) = 0, \\
    \partial_i^2 P(0) &= \alpha (\alpha - 1) \sum_{j = 1}^m (\mu K)_j \left(K^*_{ji} - \dfrac{\mu_i}{\mu_n} K^*_{jn}\right)^2 \\
    &= \alpha (\alpha - 1) \sum_{j = 1}^m \dfrac{\mu_i^2}{(\mu K)_j} \left( K_{ij} - K_{nj}\right)^2, \\
    \partial_i Q(0) &= \alpha \mu_i - \alpha \left( \dfrac{\mu_i}{\mu_n}  \right) \mu_n = 0, \\
    \partial_i^2 Q(0) &= 
    \alpha (\alpha - 1) \left( \mu_i + \dfrac{\mu_i^2}{\mu_n} \right).
\end{align}
Denote $\eta := \eta_{\alpha}(\mu, K)$. Now, for some fixed $i$, if we restrict the domain of $f$ to
\begin{equation}
    \text{Dom}_i(\varepsilon) := \{f: 0 \leq f_i \leq \varepsilon, f_s = 0 \text{ for } s \neq i\},
\end{equation}
where $\varepsilon > 0$ is small enough, then it is necessary that by the definition of $\eta$,
\begin{equation}
    \eta \geq \dfrac{\log \ex_{\mu K}[(1+K^*f)^{\alpha}]}{\log \ex_{\mu}[(1+f)^{\alpha}]}, \, f \in \text{Dom}_i(\varepsilon).
\end{equation}
Rearranging, we obtain
\begin{equation}
    \log P(f) - \eta \log Q(f) \leq 0.
    \label{eq:necessity}
\end{equation}
It is obvious that $P(0) = Q(0) = 1$. Therefore, to hold Eq. \eqref{eq:necessity}, it is necessary that
\begin{equation}
    S_1(f) := \dfrac{\partial}{\partial f_i}\left( \dfrac{P(f)}{Q^{\eta}(f)} \right) \leq 0, \, f \in \text{Dom}_i(\varepsilon).
\end{equation}
Evaluating $S_1(f)$ at $f = 0$, by the definition of partial derivative, it yields
\begin{align}
    S_1(0) &= \lim\limits_{f_i \downarrow 0} \left. \dfrac{1}{f_i} \cdot \left(\dfrac{P(f)}{Q^{\eta}(f)} - 1\right) \right|_{f = 0} \\
    &= \lim\limits_{f_i \downarrow 0} \left. \dfrac{\partial_i P(f) - \partial_i Q^{\eta}(f)}{f_i \cdot \partial_i Q^{\eta}(f) + Q^{\eta}(f)} \right|_{f = 0} \label{eq:s1second} \\ &= \partial_i P(0) - \partial_i Q(0) = 0,
\end{align}
where Eq. \eqref{eq:s1second} is due to the L'H\^opital's rule. In order to obtain $S_1(f) \leq 0$ in $\text{Dom}_i(\varepsilon)$, it is necessary to have the second derivative being negative, which is
\begin{equation}
    S_2(f) := 2\dfrac{\partial^2}{\partial f_i^2}\left( \dfrac{P(f)}{Q^{\eta}(f)} \right) \leq 0, \, f \in \text{Dom}_i(\varepsilon).
\end{equation}
Evaluating $S_2(f)$ at $f = 0$, by the L'H\^opital's rule, it yields
\begin{equation}
    S_2(0) = \left.2\lim\limits_{f_i \downarrow 0} \dfrac{P(f) - Q^{\eta}(f)}{f_i^2 \cdot Q^{\eta}(f)}\right|_{f = 0} = {\partial_i^2 P(0) - \partial_i^2 Q^{\eta}(0)},
\end{equation}
and by direct calculation,
\begin{equation}
    \partial_i^2 P(0) - \partial_i^2 Q^{\eta}(0) = \partial_i^2 P(0) - \eta \partial_i^2 Q(0),
\end{equation}
where we use the fact that $\partial_i Q(0) = 0$ and $Q^{\eta}(0) = Q^{\eta - 1}(0) = 1$. Therefore, $S_2(0) \leq 0$ is equivalent to
\begin{align}
    \eta \geq \dfrac{\partial_i^2 P(0)}{\partial_i^2 Q(0)} 
    = 
    \dfrac{\mu_i \mu_{n}}{\mu_i + \mu_{n}} \sum_{j = 1}^m \dfrac{{(K_{ij} - K_{n j})^2}}{(\mu K)_j}.
\end{align}
Since $i$ is arbitrary and, indeed, one can choose any $f_{\ell}$ to be $f_n$ in Eq. \eqref{eq:elln}, which means that $\eta$ should satisfy
\begin{equation}
    \eta \geq \max_{1 \leq i \neq \ell \leq n} \dfrac{\mu_i \mu_{\ell}}{\mu_i + \mu_{\ell}} \sum_{j = 1}^m \dfrac{{(K_{ij} - K_{\ell j})^2}}{(\mu K)_j},
\end{equation}
which is the desired result.
\end{proof}

Note that the theorem gives a lower bound that is easier to calculate compared to Theorem \ref{thm:1} and does not depend on $\alpha$. Therefore, one could take $\alpha \downarrow 1$ and obtain a lower bound of $\eta_{\text{KL}}(\mu, K)$.

We shall see in the following example that in the two-dimensional BSC, the lower bounds given in Theorem \ref{thm:1} and \ref{thm:main} coincide. As mentioned above, one could take $\alpha \downarrow 1$ to retrieve a classical lower bound for $\eta_{\text{KL}}$.
\begin{exmp}
Let $\mu = \text{Ber}(1/2)$ and $K = \text{BSC}(\varepsilon)$ for $0 \leq \varepsilon \leq 1/2$. We have $\mu K = \text{Ber}(1/2)$ in this case. Hence, Theorem \ref{thm:main} gives
\begin{align}
    \eta_{\alpha}(\mu, K) &\geq \dfrac{1}{2} \left[(K_{11} - K_{21})^2 + (K_{12} - K_{22})^2\right] \\ &= (1 - 2\varepsilon)^2 = \eta_{\chi^2}(\mu, K).\label{eq:ex1}
\end{align}
Eq. \eqref{eq:ex1} is illustrated in \cite[Example 3.1]{raginsky2016strong}. Taking $\alpha \downarrow 1$, we retrieve the well-known result described in \cite[Theorem 3.3]{raginsky2016strong} that $\eta_{\text{KL}}(\mu, K) \geq \eta_{\chi^2}(\mu, K)$.
\end{exmp}

\begin{exmp}
    Theorem \ref{thm:main} could sometimes help us to identify whether $\eta_{\alpha}(\mu, K) = 1$. Consider the following case: If there exist $i \neq \ell$, $a \neq b$ such that $K_{ia} = K_{\ell b} = 1$ and $K_{ja} = K_{j b} = 0$ for all $j \neq i, \ell$, then we obtain
    \begin{equation}
        \eta_{\alpha}(\mu, K) \geq \dfrac{\mu_i \mu_{\ell}}{\mu_i + \mu_{\ell}} \left( \dfrac{1}{\mu_i} + \dfrac{1}{\mu_{\ell}}\right) = 1.
        \label{eq:exmp2}
    \end{equation}
    Thus we conclude $\eta_{\alpha}(\mu, K) = 1$. Moreover, since the right hand side of Eq. \eqref{eq:exmp2} does not depend on $\mu$, we could further obtain $\eta_{\alpha}(K) = 1$.
\end{exmp}

\section{Properties of $\eta_2(\mu, K)$}

In this section, we investigate a special case of the SDPI constant when $\alpha = 2$. We have the following theorem.

\begin{thm}
    Let $\mu$ be a probability distribution on a finite alphabet and $K$ be a stochastic matrix with finite dimension. Whenever $\eta_{\chi^2}(\mu, K) < 1$, the SDPI constant $\eta_{2}(\mu, K)$ satisfies
    \begin{equation}
        \eta_{2}(\mu, K) = \sup \left\{\dfrac{D_{2}(\nu K \| \mu K)}{D_{2}(\nu \| \mu)}: \nu \neq \mu \text{ and } \prod_j \nu_j = 0\right\},
    \end{equation}
    which indicates that the supremum is achieved when at least one entry of $\nu$ is zero.
    \label{thm:3}
\end{thm}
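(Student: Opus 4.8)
The plan is to reduce everything to $\chi^2$-divergences and exploit a one-parameter monotonicity. Fix $\nu\neq\mu$ and put $f:=\df\nu/\df\mu-1$, so $\ex_\mu[f]=0$; by Eq.~\eqref{eq:adjchannel} and the fact that $K^*$ preserves the constant function $1$, we have $\df(\nu K)/\df(\mu K)-1=K^*f$ with $\ex_{\mu K}[K^*f]=0$. Since $\alpha=2$ the two R\'enyi divergences collapse to $D_2(\nu\|\mu)=\log(1+\ex_\mu[f^2])$ and $D_2(\nu K\|\mu K)=\log(1+\ex_{\mu K}[(K^*f)^2])$. Writing $f=tg$ with $\ex_\mu[g^2]=1$, $\ex_\mu[g]=0$ and $t=\sqrt{\ex_\mu[f^2]}>0$, and setting $\rho(g):=\ex_{\mu K}[(K^*g)^2]$, the definition of $\eta_{\chi^2}(\mu,K)$ together with the hypothesis gives $0\le\rho(g)\le\eta_{\chi^2}(\mu,K)<1$, and the ratio in question becomes
\[
 \eta_\nu \;=\; h_g(t)\;:=\;\frac{\log\bigl(1+\rho(g)\,t^2\bigr)}{\log\bigl(1+t^2\bigr)}.
\]

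Next I would show that $t\mapsto h_g(t)$ is non-decreasing on $(0,\infty)$, and strictly increasing whenever $\rho(g)>0$. Substituting $u=t^2$ and differentiating, the sign of $h_g'$ equals the sign of $\rho(g)(1+u)\log(1+u)-(1+\rho(g)u)\log(1+\rho(g)u)$. With $\psi(x):=x\log x$, which is convex on $[0,\infty)$ with $\psi(1)=0$, and using that $1+\rho(g)u=(1-\rho(g))\cdot 1+\rho(g)\cdot(1+u)$ is a genuine convex combination (this is precisely where $\rho(g)\le 1$, i.e.\ the hypothesis $\eta_{\chi^2}(\mu,K)<1$, enters), Jensen gives $\psi(1+\rho(g)u)\le\rho(g)\,\psi(1+u)$, hence $h_g'\ge 0$; the inequality is strict when $\rho(g)\in(0,1)$ and $u>0$ by strict convexity of $\psi$.

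Finally, for a fixed direction $g$ the constraint $\nu=\mu(1+tg)\ge 0$ is exactly $t\le t_{\max}(g):=\bigl(-\min_x g(x)\bigr)^{-1}$, a well-defined number in $(0,\infty)$ because $\mu$ is strictly positive and $g$ is mean-zero and nonzero, so $\min_x g(x)<0$. At $t=t_{\max}(g)$ the distribution $\nu_g:=\mu(1+t_{\max}(g)g)$ is a genuine probability distribution, differs from $\mu$, and has a zero entry at a point where $g$ attains its minimum, so $\prod_j(\nu_g)_j=0$. Combining with the monotonicity, $\eta_\nu=h_g(t)\le h_g(t_{\max}(g))=\eta_{\nu_g}$ for every $\nu\neq\mu$, whence $\sup_{\nu\neq\mu}\eta_\nu\le\sup\{\eta_{\nu'}:\nu'\neq\mu,\ \prod_j\nu'_j=0\}$; the reverse inequality is immediate since the restricted family is a subset. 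Using finiteness of $\XX$ (the set $\{\nu:\prod_j\nu_j=0\}$ is compact and $\eta_{(\cdot)}$ is continuous and bounded away from the indeterminate form there) the restricted supremum is attained, and strict monotonicity forces any maximizer over all $\nu\neq\mu$ to lie on the boundary; this is the sense in which the supremum is achieved when at least one entry of $\nu$ is zero.

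The only real obstacle is the monotonicity step, i.e.\ the scalar inequality $\psi(1+\rho u)\le\rho\,\psi(1+u)$ for $\rho\in[0,1]$; once this is available the rest is bookkeeping. Two degenerate situations need only a one-line remark each: if $\rho(g)=0$ (equivalently $K^*g\equiv 0$) then $\eta_\nu=0$ and the comparison with $\nu_g$ is trivial, and if $|\XX|=1$ the statement is vacuous.
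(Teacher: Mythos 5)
Your proof is correct and takes a genuinely different route from the paper's. The paper sets up a constrained optimization with a Lagrangian, derives first-order conditions, and shows that any interior critical point must satisfy $\varphi(D_2(\nu K\|\mu K))=\varphi(D_2(\nu\|\mu))$ for the strictly decreasing function $\varphi(t)=\tfrac1t-\tfrac1{te^t}$, which under $\eta_{\chi^2}(\mu,K)<1$ forces $\nu=\mu$; it then needs a separate (and, as the paper's own footnote admits, somewhat delicate) argument that $\nu=\mu$ is a local minimum in order to push the supremum to the boundary. You instead foliate the simplex by rays $\nu=\mu(1+tg)$ through $\mu$, observe that along each ray the objective is exactly $h_g(t)=\log(1+\rho(g)t^2)/\log(1+t^2)$ with $\rho(g)\le\eta_{\chi^2}(\mu,K)\le 1$, and prove $h_g$ is non-decreasing in $t$ via the Jensen inequality $\psi(1+\rho u)\le\rho\,\psi(1+u)$ for $\psi(x)=x\log x$. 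This is a clean, fully rigorous replacement for the paper's local-minimum step: it shows directly that every $\nu$ is dominated by the endpoint of its ray, which necessarily has a zero coordinate, and it sidesteps the Lagrangian machinery entirely. Your approach also makes transparent that the non-strict inequality (hence the equality of the two suprema) only uses $\rho(g)\le 1$, i.e.\ the ordinary data processing inequality for $\chi^2$, while the hypothesis $\eta_{\chi^2}(\mu,K)<1$ is what gives strictness; the paper's route uses the hypothesis more centrally. The one point worth spelling out in a final write-up is the verification that $\rho(g)=\ex_{\mu K}[(K^*g)^2]\le\eta_{\chi^2}(\mu,K)$ for every mean-zero unit-variance $g$ (this is the variational characterization of $\eta_{\chi^2}$ from the cited Remark~3.3, extended from densities to arbitrary mean-zero directions by homogeneity), but that is standard and not a gap.
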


\begin{proof}
    Denote $\eta := \eta_{2}(\mu, K)$ for simplicity. We formulate $\eta$ into an optimization problem as follows:
    \begin{align}
        \max_{\nu_i} \qquad &  \dfrac{\log \sum_{j} (\mu K)_j^{-1} (\nu K)_j^{2}}{\log \sum_{j} \mu_j^{-1} \nu_j^{2}} 
        \,\,\, \text{s.t.} \,\,\,   \sum_j \nu_j = 1.
    \end{align}
    Here we denote $(\mu K)_j$ as the $j$-th entry of the vector (distribution) $\mu K$. Define the Lagrangian function
    \begin{equation}
        G(\nu, \lambda) := \dfrac{\log \sum_{j} (\mu K)_j^{-1} (\nu K)_j^{2}}{\log \sum_{j} \mu_j^{-1} \nu_j^{2}} + \lambda \left( \sum_j \nu_j - 1 \right).
    \end{equation}
    The necessary condition that an extremum should satisfy is
    \begin{equation}
        \begin{aligned}
            0 = \dfrac{\partial G}{\partial \nu_i} & = \dfrac{2}{\log \sum_{j} \mu_j^{-1} \nu_j^{2} } \left(\dfrac{\sum_{j} (\mu K)_j^{-1} (\nu K)_j K_{ij}}{\sum_{j} (\mu K)_j^{-1} (\nu K)_j^{2}}  \right) \\ &- \dfrac{2\log \sum_{j} (\mu K)_j^{-1} (\nu K)_j^{2}}{\left( \log \sum_{j} \mu_j^{-1} \nu_j^{2} \right)^2}\left( \dfrac{\mu_i^{-1} \nu_i}{\sum_{j} \mu_j^{-1} \nu_j^{2}} \right) + \lambda, \, \forall i.
        \end{aligned}
        \label{eq:necessityextreme}
    \end{equation}
    A necessary condition for Eq. \eqref{eq:necessityextreme} to hold is 
        $\sum_i \nu_i {\partial G}/{\partial \nu_i} = 0.$
    Hence, we find
    \begin{equation}
        - \lambda = \dfrac{2 \log \sum_{j} \mu_j^{-1} \nu_j^{2} - 2 \log \sum_{j} (\mu K)_j^{-1} (\nu K)_j^{2}}{\left( \log \sum_{j} \mu_j^{-1} \nu_j^{2} \right)^2}.
        \label{eq:lambdalag}
    \end{equation}
    Plugging Eq. \eqref{eq:lambdalag} into Eq. \eqref{eq:necessityextreme}, we obtain
    \begin{equation}
        \begin{aligned}
            \dfrac{\partial G}{\partial \nu_i} & = \dfrac{2}{\log \sum_{j} \mu_j^{-1} \nu_j^{2} } \left(\dfrac{\sum_{j} (\mu K)_j^{-1} (\nu K)_j K_{ij}}{\sum_{j} (\mu K)_j^{-1} (\nu K)_j^{2}} - 1 \right) \\ &- \dfrac{2\log \sum_{j} (\mu K)_j^{-1} (\nu K)_j^{2}}{\left( \log \sum_{j} \mu_j^{-1} \nu_j^{2} \right)^2}\left( \dfrac{\mu_i^{-1} \nu_i}{\sum_{j} \mu_j^{-1} \nu_j^{2}} - 1\right), \, \forall i.
        \end{aligned}
    \end{equation}
    Similarly, by Eq. \eqref{eq:necessityextreme}, we must have
        $\sum_i \mu_i {\partial G}/{\partial \nu_i} = 0.$
    We obtain
    \begin{equation}
        \begin{aligned}
            &\left(1 - \dfrac{1}{\sum_{j} (\mu K)_j^{-1} (\nu K)_j^{2}}\right) \log \sum_{j} \mu_j^{-1} \nu_j^{2} \\ -& \left(1 - \dfrac{1}{\sum_{j} \mu_j^{-1} \nu_j^{2}}\right) \log \sum_{j} (\mu K)_j^{-1} (\nu K)_j^{2} = 0.
        \end{aligned}
        \label{eq:contractconstraint}
    \end{equation}
    We could rewrite Eq. \eqref{eq:contractconstraint} as
    \begin{equation}
        \begin{aligned}
            0 & = \left(\dfrac{1}{D_{2}(\nu K \| \mu K)} - \dfrac{1}{ D_{2}(\nu K \| \mu K)H_2(\nu K \| \mu K)}\right) \\ & - \left(\dfrac{1}{ D_{2}(\nu \| \mu)} - \dfrac{1}{ D_{2}(\nu \| \mu)H_{2} (\nu \| \mu)}\right).
        \end{aligned}
        \label{eq:eqtoanalyze}
    \end{equation}
    Here we define for two distributions $\nu, \mu$,
    \begin{equation}
        H_{2}(\nu \| \mu) := \sum_{j} \mu_j^{-1} \nu_j^{2}
    \end{equation}
    and we observe that $H_{2}(\nu \| \mu) = \exp{\left[D_{2}(\nu \| \mu)\right]}.$
    Define the following function for $t > 0$,
    \begin{equation}
        \varphi(t) := \dfrac{1}{t} - \dfrac{1}{t \exp (t)}.
    \end{equation}
    We would like to show that $\varphi(t)$ in non-increasing. Indeed, observing that
    \begin{equation}
        \dfrac{\df \varphi(t)}{\df t} = \dfrac{-\exp(t) + t + 1}{t^2 \exp(t)} < 0.
    \end{equation}
    Moreover, Eq. \eqref{eq:eqtoanalyze} is equivalent to
    \begin{equation}
        \varphi(D_2(\nu K \| \mu K)) - \varphi(D_2(\nu \| \mu)) = 0.
        \label{eq:monotone}
    \end{equation}
    By the non-increasing property of $\varphi$, Eq. \eqref{eq:monotone} is satisfied if and only if $D_2(\nu K \| \mu K) = D_2(\nu \| \mu)$ since we have $D_2(\nu K \| \mu K) \leq D_2(\nu \| \mu)$ always. However, we know by the definition of $\eta_{\chi^2}(\mu, K)$ that
    \begin{equation}
        \chi^2(\nu K \| \mu K) \leq \eta_{\chi^2}(\mu, K)\chi^2(\nu \| \mu).
    \end{equation}
    Therefore,
    \begin{equation}
        H_{2}(\nu K \| \mu K) - 1 \leq \eta_{\chi^2}(\mu, K)(H_{2}(\nu \| \mu) - 1).
    \end{equation}
    Since $\eta_{\chi^2}(\mu, K) < 1$ by assumption, the equality is satisfied if and only if $H_{2}(\nu \| \mu) = 1$, so that $D_2(\nu \| \mu) = 0$. In conclusion, Eq. \eqref{eq:monotone} is fulfilled if and only if
        $D_2(\nu K \| \mu K) = D_2(\nu \| \mu) = 0,$
    which is equivalent to $\nu = \mu$. However, we know by the proof Theorem \ref{thm:1} that for $\nu_{\varepsilon} = \mu + \varepsilon (\nu - \mu)$,
    \begin{equation}
        \limsup\limits_{\varepsilon \downarrow 0} \dfrac{D_{2}(\nu_{\varepsilon} K \| \mu K)}{D_{2}(\nu_{\varepsilon} \| \mu)} \leq \eta_{\chi^2}(\mu, K),
        \label{eq:subtlepoint}
    \end{equation}
    which implies that $\nu = \mu$ is a local minimum\footnote{Here is a subtle problem that the local minimum is not well defined in this case. One can argue as follows. Let $\Omega := \{\nu: \nu \neq \mu\}$ and $\Omega_{\delta} := \{\nu_{\varepsilon}: 0 < \varepsilon < \delta\}$. Consider the feasible region $\Omega \backslash \Omega_{\delta}$. The supremum is then taken on $\partial \Omega$ or $\partial \Omega_{\delta}$. However, as $\delta \downarrow 0$ we know by Eq. \eqref{eq:subtlepoint} that the supremum cannot be taken on $\partial \Omega_{\delta}$.}. Therefore, the supremum is taken at the boundary of the feasible region $\Omega = \{\nu: \nu \neq \mu\}$ except $\nu = \mu$, which is the desired result.
\end{proof}


\begin{cor}
Let
    \begin{equation}
      \tilde{\eta}_2(\mu, K) :=   \sup \left\{\dfrac{D_{2}(\nu K \| \mu K)}{D_{2}(\nu \| \mu)}: \nu \neq \mu \text{ and } \prod_j \nu_j = 0\right\}.
    \end{equation}
    Then $\eta_{2}(\mu, K)= \max \{ \tilde{\eta}_2(\mu, K), \eta_{\chi^2}(\mu, K)\}$.
\end{cor}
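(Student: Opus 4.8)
The plan is to combine Theorem \ref{thm:3} with the degenerate case it excludes. Observe that $\eta_{\chi^2}(\mu,K)$ is itself always a lower bound for $\eta_2(\mu,K)$: this follows directly from Theorem \ref{thm:1} (take $\alpha=2$). Hence in all cases $\eta_2(\mu,K)\ge\eta_{\chi^2}(\mu,K)$, and trivially $\eta_2(\mu,K)\ge\tilde\eta_2(\mu,K)$ since the supremum defining $\tilde\eta_2$ is taken over a subset of $\{\nu:\nu\neq\mu\}$. Therefore the nontrivial direction is the upper bound $\eta_2(\mu,K)\le\max\{\tilde\eta_2(\mu,K),\eta_{\chi^2}(\mu,K)\}$.

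For the upper bound I would split into two cases according to whether $\eta_{\chi^2}(\mu,K)<1$ or $\eta_{\chi^2}(\mu,K)=1$. In the first case, Theorem \ref{thm:3} applies verbatim and gives $\eta_2(\mu,K)=\tilde\eta_2(\mu,K)$, which is certainly $\le\max\{\tilde\eta_2(\mu,K),\eta_{\chi^2}(\mu,K)\}$. In the second case, $\eta_{\chi^2}(\mu,K)=1$; since $\eta_2(\mu,K)\le1$ always (by the data processing inequality for R\'enyi divergence, the ratio $D_2(\nu K\|\mu K)/D_2(\nu\|\mu)$ never exceeds $1$), we get $\eta_2(\mu,K)\le1=\eta_{\chi^2}(\mu,K)\le\max\{\tilde\eta_2(\mu,K),\eta_{\chi^2}(\mu,K)\}$. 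Combining the two cases with the two lower bounds from the previous paragraph yields the claimed equality.

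The main subtlety — really the only place one must be careful — is the hypothesis of Theorem \ref{thm:3} as stated: it is phrased with $\eta_{\chi^2}(\nu,K)<1$, whereas the natural reading (and the one consistent with its proof) is $\eta_{\chi^2}(\mu,K)<1$. I would simply invoke Theorem \ref{thm:3} under the condition $\eta_{\chi^2}(\mu,K)<1$, which is exactly what its proof establishes (the proof uses $\eta_{\chi^2}(\mu,K)<1$ to force $H_2(\nu\|\mu)=1$). Beyond this, no calculation is needed: the corollary is a short packaging argument, and the only thing to verify is that the two regimes $\eta_{\chi^2}(\mu,K)<1$ and $\eta_{\chi^2}(\mu,K)=1$ exhaust all possibilities, which holds because $\eta_{\chi^2}(\mu,K)\in[0,1]$ for any admissible pair.
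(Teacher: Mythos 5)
Your proposal is correct and matches the paper's (one-line) proof, which simply cites Theorems \ref{thm:1} and \ref{thm:3}; your case split on $\eta_{\chi^2}(\mu,K)<1$ versus $\eta_{\chi^2}(\mu,K)=1$, together with the two easy lower bounds, is exactly the intended unpacking of that citation. You are also right that the hypothesis of Theorem \ref{thm:3} should read $\eta_{\chi^2}(\mu,K)<1$ rather than $\eta_{\chi^2}(\nu,K)<1$.
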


\begin{proof}
    This is a direct consequence of Theorems \ref{thm:1} and \ref{thm:3}.
\end{proof}

The above theorem helps us to formulate $\eta_2(\mu, K)$ explicitly when $\mu$ is a two-dimensional distribution. 
\begin{cor}
        Let $\mu = \text{Ber}(p)$ and consider the Markov kernel induced by the following $2\times 2$ matrix
    \begin{equation}
        K = \begin{bmatrix}
            1 - \varepsilon & \varepsilon \\
            \theta & 1 - \theta
        \end{bmatrix}.
    \end{equation}    
    One has that \begin{equation}
        \begin{split}
            \eta_2(\mu, K) = \max & \left\{\log_{\frac{1}{p}} \left(\dfrac{(1-\varepsilon)^2}{\mu K(\{0\})} + \dfrac{\varepsilon^2}{\mu K(\{1\})}\right),\right. \\
            & \left.\log_{\frac{1}{1-p}} \left(\dfrac{\theta^2}{\mu K(\{0\})} + \dfrac{(1-\theta)^2}{\mu K(\{1\})}\right)\right\},\label{eq:etaBinary}
        \end{split}
    \end{equation}        
    where $\mu K = \text{Ber}(p + \theta - \varepsilon p - p \theta)$.
    Moreover, if $\theta = \varepsilon$, then $K=\text{BSC}(\varepsilon)$ and
    \begin{equation}
        \eta_2(\mu, \text{BSC}(\varepsilon)) =  \log_{\frac{1}{1-p}} \left(\dfrac{\varepsilon^2}{\mu K(\{0\})} + \dfrac{(1-\varepsilon)^2}{\mu K(\{1 \})}\right).\label{eq:resforbin1}
    \end{equation}
    Furthermore,
    \begin{equation}
             \eta_2(\text{BSC}(\varepsilon)) = \log_2 \left(2 (1 - 2\varepsilon (1-\varepsilon))\right).
        \label{eq:resforbin2}
    \end{equation}
        \end{cor}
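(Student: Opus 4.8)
The plan is to read the corollary off the preceding result $\eta_2(\mu,K)=\max\{\tilde\eta_2(\mu,K),\eta_{\chi^2}(\mu,K)\}$, by computing $\tilde\eta_2(\mu,K)$ explicitly for a binary channel and then showing that the $\eta_{\chi^2}$ term never wins the maximum.

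First I would specialize $\tilde\eta_2$ to a two-point alphabet. The only probability vectors $\nu=(\nu_0,\nu_1)$ with $\nu_0\nu_1=0$ are the point masses $\delta_0=(1,0)$ and $\delta_1=(0,1)$, so the supremum defining $\tilde\eta_2(\mu,K)$ is a maximum over just these two. For $\nu=\delta_0$ we have $D_2(\delta_0\|\mu)=\log(1/\mu(\{0\}))=\log(1/p)$, and since $\delta_0K$ is the first row $(1-\varepsilon,\varepsilon)$ of $K$, $D_2(\delta_0K\|\mu K)=\log\!\big(\tfrac{(1-\varepsilon)^2}{\mu K(\{0\})}+\tfrac{\varepsilon^2}{\mu K(\{1\})}\big)$; dividing gives the first logarithm in \eqref{eq:etaBinary}, and $\nu=\delta_1$ gives the second by the same computation with the second row of $K$. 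The stated form $\mu K=\text{Ber}(p+\theta-\varepsilon p-p\theta)$ is just $\mu K(\{0\})=p(1-\varepsilon)+(1-p)\theta$. It remains to drop the $\eta_{\chi^2}$ term from the max: if $\eta_{\chi^2}(\mu,K)<1$ then Theorem~\ref{thm:3} gives $\eta_2(\mu,K)=\tilde\eta_2(\mu,K)$ while Theorem~\ref{thm:1} gives $\eta_2(\mu,K)\ge\eta_{\chi^2}(\mu,K)$, so $\tilde\eta_2(\mu,K)\ge\eta_{\chi^2}(\mu,K)$ automatically; if $\eta_{\chi^2}(\mu,K)=1$ then $K$ is a relabeling (for a binary channel this forces $\varepsilon=\theta\in\{0,1\}$), every $\nu$ has ratio $1$, and both sides of \eqref{eq:etaBinary} equal $1$. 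This proves \eqref{eq:etaBinary}.

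Next, for \eqref{eq:resforbin1} set $\theta=\varepsilon$, so $K=\text{BSC}(\varepsilon)$ and $\mu K=\text{Ber}(\varepsilon+p(1-2\varepsilon))$; WLOG $\varepsilon\le\tfrac12$, and by the $p\leftrightarrow 1-p$ symmetry of the BSC WLOG $p\ge\tfrac12$. I must then show that the $\delta_1$-term is the larger one. With $b:=\mu K(\{0\})\ge\tfrac12\ge\mu K(\{1\})=:a$ and $A:=\tfrac{(1-\varepsilon)^2}{b}+\tfrac{\varepsilon^2}{a}$, $B:=\tfrac{\varepsilon^2}{b}+\tfrac{(1-\varepsilon)^2}{a}$, one has $B\ge A\ge 1$ (the last step from $\tfrac{x^2}{b}+\tfrac{y^2}{a}\ge\tfrac{(x+y)^2}{a+b}$ with $x+y=1$), so the claim reduces to the scalar inequality $\tfrac{\ln B}{-\ln(1-p)}\ge\tfrac{\ln A}{-\ln p}$; I would establish this by a monotonicity argument in $p$ (the two sides coincide at $p=\tfrac12$ and a derivative/growth-rate comparison keeps the left side above the right on $[\tfrac12,1)$). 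Rewriting $a,b$ in closed form gives \eqref{eq:resforbin1}.

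Finally, \eqref{eq:resforbin2} asks for $\sup_{p\in(0,1)}\eta_2(\text{Ber}(p),\text{BSC}(\varepsilon))$ using the closed form just obtained; the claim is that the supremum is attained at the uniform input $p=\tfrac12$, where $\mu K=\text{Ber}(\tfrac12)$ and the expression collapses to $\log_2\!\big(2(1-\varepsilon)^2+2\varepsilon^2\big)=\log_2\!\big(2(1-2\varepsilon(1-\varepsilon))\big)$. I expect this optimization to be the main obstacle. For $\eta_{\text{KL}}$ and $\eta_{\chi^2}$ the extremal input is uniform by a short symmetry argument, but here the objective is a \emph{ratio} of logarithms whose numerator and denominator both move with $p$ through $\mu K$, so identifying the maximizer requires a careful study of the stationarity condition $\partial_p\,\eta_2(\text{Ber}(p),\text{BSC}(\varepsilon))=0$; this is exactly where the non-standard behavior of $\eta_\alpha$ (the same behavior responsible for its violation of the universal bound $\eta_\alpha\le\eta_{\text{TV}}$) makes the analysis genuinely more delicate than for ordinary $\Phi$-divergences.
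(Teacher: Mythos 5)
Your derivation of the first display (the $\max$ over the two point masses) is correct and is essentially the paper's argument: invoke Theorem~\ref{thm:3} to restrict the supremum to $\nu\in\{\delta_0,\delta_1\}$ and compute the two ratios; your extra handling of the degenerate case $\eta_{\chi^2}(\mu,K)=1$ is a sensible addition. The two remaining steps, however, are left as sketches, and the gaps there are genuine. For the BSC formula: under your normalization $p\ge\tfrac12$ you correctly obtain $B\ge A\ge 1$, but then $-\ln(1-p)\ge -\ln p$, i.e.\ the denominator comparison points the \emph{wrong} way, so $\tfrac{\ln B}{-\ln(1-p)}\ge\tfrac{\ln A}{-\ln p}$ does not follow from those two facts and the unwritten ``monotonicity argument in $p$'' is doing all of the work. (The paper normalizes $p\le\tfrac12$ instead, where the base comparison is favorable but its claimed inequality $B-A\ge 0$ reverses sign: at $p=0.3$, $\varepsilon=0.1$ one has $B\approx 1.257 < A\approx 2.397$. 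So neither normalization admits the intended one-line argument, and note also that the stated formula is not symmetric under $p\leftrightarrow 1-p$ while $\eta_2(\text{Ber}(p),\text{BSC}(\varepsilon))$ is.)

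The gap you flag in the final step is the fatal one: the supremum over $p$ is \emph{not} attained at $p=\tfrac12$. Take $\varepsilon=0.1$, $\mu=\text{Ber}(0.3)$, $\nu=\delta_0$. Then $\mu K=(0.34,0.66)$,
\begin{equation*}
D_2(\nu\|\mu)=\ln\tfrac{1}{0.3}\approx 1.2040,\qquad D_2(\nu K\|\mu K)=\ln\Bigl(\tfrac{0.81}{0.34}+\tfrac{0.01}{0.66}\Bigr)\approx 0.8744,
\end{equation*}
so already from the definition $\eta_2(\text{BSC}(0.1))\ge 0.7263$, whereas $\log_2\bigl(2(1-2\varepsilon(1-\varepsilon))\bigr)=\log_2(1.64)\approx 0.7137$. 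Hence the monotonicity in $p$ that you would need (and that the paper asserts in one sentence without proof) fails, and the final closed-form for $\eta_2(\text{BSC}(\varepsilon))$ appears to be incorrect as stated. Your instinct that this optimization is ``the main obstacle'' is exactly right, but it is not an obstacle that a more careful stationarity analysis can remove; the extremal input for $\eta_2$ of the BSC is not uniform.
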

\begin{proof}
      Since $\eta_{\chi^2}(\mu, K) < 1$, Theorem \ref{thm:3} implies that the supremum is achieved by $\nu = \delta_0$ or $\nu = \delta_1$, which is exactly Eq. \eqref{eq:etaBinary}.
      
      Consider now $\theta = \varepsilon$, we could assume without loss of generality that, $0 < p, \varepsilon \leq 1/2$. Therefore, we have $p + \varepsilon - 2 \varepsilon p \leq 1/2$. By direct calculation
\begin{equation}
    \begin{aligned}
        & \left(\dfrac{\varepsilon^2}{p+\varepsilon-2\varepsilon p} + \dfrac{(1-\varepsilon)^2}{1-\varepsilon-p+2\varepsilon p}\right) \\ - & \left(\dfrac{(1-\varepsilon)^2}{p+\varepsilon-2\varepsilon p} + \dfrac{\varepsilon^2}{1-\varepsilon-p+2\varepsilon p}\right) \geq 0,
    \end{aligned}
\end{equation}
together with $\dfrac{1}{1-p} \leq \dfrac{1}{p}$ we conclude that the maximum is always achieved by the right hand side of Eq. \eqref{eq:resforbin1}.

Furthermore, by noting that $\eta_2(\mu, \text{BSC}(\varepsilon))$ is increasing with respect to $p$, we conclude that the maximum is always achieved when $p = 1/2$. Since
\begin{equation}
     \eta_2(\text{BSC}(\varepsilon)) = \sup_{0<p\leq 1/2} \eta_2(\text{Ber}(p), \text{BSC}(\varepsilon)),
\end{equation}
we arrive at Eq. \eqref{eq:resforbin2} by plugging $p = 1/2$ in Eq. \eqref{eq:resforbin1}.
\end{proof}

Figure \ref{fig:comparison} shows the plots for $\eta_{\chi^2}(\text{BSC}(\varepsilon))$, $\eta_{2}(\text{BSC}(\varepsilon))$ and $\eta_{\text{TV}}(\text{BSC}(\varepsilon))$ when $\varepsilon$ is ranging from $0$ to $1$. It is worth mentioning that we could see the lower bound $\eta_2 \geq \eta_{\chi^2}$ holds from the figure. We would also emphasize that Eqs.~\eqref{eq:resforbin1} and~\eqref{eq:resforbin2} provide closed-form formulas for $\eta_2$, with a promise for generalization to $\eta_{\alpha}$ and to arbitrary channels in the future.
\begin{figure}[h]
    \centering
    \includegraphics[width=0.43\textwidth]{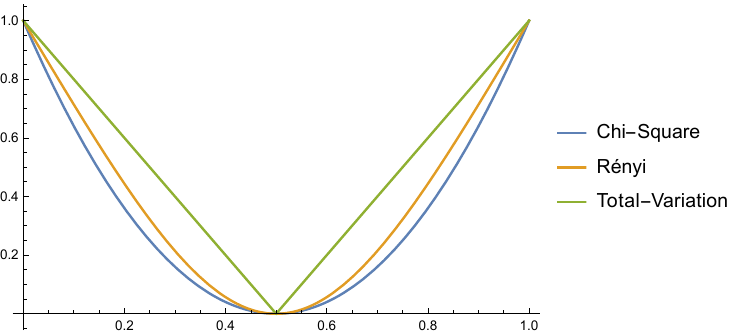}
    \caption{Plots for $\eta_{\chi^2}(\text{BSC}(\varepsilon))$, $\eta_{2}(\text{BSC}(\varepsilon))$ and $\eta_{\text{TV}}(\text{BSC}(\varepsilon))$.}
    \label{fig:comparison}
\end{figure}

\section*{Acknowledgements}

The work in this paper was supported in part by the Swiss National Science Foundation under Grant 200364.

\newpage


\bibliographystyle{IEEEtran}
\bibliography{bib}


\end{document}